\newcommand{\RNum}[1]{\uppercase\expandafter{\romannumeral #1\relax}}
\begin{document}
	\title{Solving the all pairs shortest path problem after minor update of a large dense graph}
	
	\author{Gangli Liu}
	\affiliation{%
		\institution{Tsinghua University}
	}
	\email{gl-liu13@mails.tsinghua.edu.cn}

\begin{abstract}
The all pairs shortest path problem is a fundamental optimization problem in graph theory. We deal with re-calculating the all-pairs shortest path (APSP) matrix after a minor modification of a weighted dense graph, e.g., adding a node, removing a node, or updating an edge. We assume the APSP matrix for the original graph is already known. The graph can be directed or undirected. A cold-start calculation of the new APSP matrix by traditional algorithms, like the Floyd-Warshall algorithm or Dijkstra's algorithm, needs $ O(n^3) $ time. We propose two algorithms for warm-start calculation of the new APSP matrix. The best case complexity for a warm-start calculation is $ O(n^2) $, the worst case complexity is $ O(n^3) $. We implemented the algorithms and tested their performance with experiments. The result shows a warm-start calculation can save a great portion of calculation time, compared with cold-start calculation.  In addition, another algorithm is devised to warm-start calculate of the shortest path between two nodes. Experiment shows warm-start calculation can save 99\% of calculation time, compared with cold-start calculation by Dijkstra's algorithm, on directed complete graphs of large sizes.

\end{abstract}
 
	\keywords{All pairs shortest path; Shortest path problem; Minimax path problem; Widest path problem}
	\maketitle
 
\section{Introduction} \label{sec_one}

The Shortest Path Problem is a fundamental optimization problem in graph theory and computer science. It involves finding the shortest path between two vertices in a graph such that the sum of the weights of its constituent edges is minimized.

Let \( G = (V, E) \) be a graph where:

\begin{itemize}
	\item \( V \) is the set of vertices (nodes),
	\item \( E \subseteq V \times V \) is the set of edges (connections between nodes),
	\item \( w: E \to \mathbb{R}^+ \cup \{0\} \) is a weight function assigning a non-negative weight to each edge.
\end{itemize}

For a given source vertex \( s \in V \) and target vertex \( t \in V \), the Shortest Path Problem seeks to find a path \( P \) from \( s \) to \( t \) such that:

\[
P = \{ v_1, v_2, \dots, v_k \}, \quad v_1 = s, \, v_k = t,
\]

and the total path weight is minimized:

\[
\text{minimize: } W(P) = \sum_{i=1}^{k-1} w(v_i, v_{i+1}),
\]

where \( (v_i, v_{i+1}) \in E \) for \( i = 1, 2, \dots, k-1 \).

The all-pairs shortest path (APSP) problem compute the shortest paths between all pairs of vertices \( u, v \in V \). A dense graph is a graph in which the number of edges is close to the maximum possible number of edges for the given number of vertices. 

Let \( G = (V, E) \) be a graph, where \( |V| = n \) is the number of vertices and \( |E| \) is the number of edges. A dense graph satisfies:

\[
|E| \approx O(n^2)
\]

This means the number of edges grows quadratically with the number of vertices. For an undirected graph, the maximum possible number of edges is:

\[
\binom{n}{2} = \frac{n(n-1)}{2}.
\]

For a directed graph, the maximum possible number of edges is:

\[
n(n-1).
\]

A graph is considered dense when \( |E| \) is close to these upper bounds. Dense graphs are common in applications like social networks, transportation networks, or communication networks where most entities are interconnected. 

In this paper, we deal with re-calculating the all-pairs shortest path (APSP) matrix after a minor modification of a weighted dense graph, e.g., adding a node, removing a node, or updating an edge. We assume the APSP matrix for the original graph is already known. The graph can be directed or undirected. A straightforward method for calculating the APSP matrix of the updated graph is to use the Floyd–Warshall algorithm to recalculate the updated graph, it needs $ O(n^3) $ time. It is a very expensive time cost for a large dense graph. We are trying to utilize the already calculated APSP matrix to make calculation of the new APSP matrix less expensive.

\section{RELATED WORK} \label{Sec_related}

The Shortest Path Problem (SPP) is a foundational topic in graph theory and optimization, with numerous applications in transportation networks, telecommunications, and logistics \cite{ahuja1990faster,zhang2024mapreduce,marcucci2024shortest,zhu2024net,chen2024shortest}. Over the years, various algorithms and techniques have been developed to solve different variants of the problem efficiently.

\subsection{Classical Algorithms}
One of the earliest contributions to SPP dates back to the work of Dijkstra (1959), who proposed a greedy algorithm to solve the single-source shortest path problem for graphs with non-negative edge weights in \(O(|V|^2)\) time, later optimized to \(O(|E| + |V| \log |V|)\) using priority queues \cite{dijkstra1959note}. The idea of this algorithm is also given in (Leyzorek et al. 1957) \cite{leyzorek1957investigation}. 

For graphs with negative edge weights, the Bellman-Ford algorithm (1958) provides a reliable solution, albeit with a higher computational cost of \(O(|V||E|)\) \cite{bellman1958routing}. The Floyd-Warshall algorithm (1962) extends these techniques to compute all-pairs shortest paths in \(O(|V|^3)\), leveraging dynamic programming \cite{floyd1962algorithm}.

\subsection{Optimizations and Modern Variants}
Advances in data structures, such as Fibonacci heaps \cite{fredman1987fibonacci}, have further improved the efficiency of Dijkstra’s algorithm. More recently, heuristic-based approaches like $A^{*}$ have been widely adopted for real-world applications, where an admissible heuristic guides the search to improve runtime performance \cite{hart1968formal}. Parallel and distributed versions of shortest path algorithms have also emerged, leveraging modern computing architectures for large-scale graphs \cite{meyer1998delta}.

\subsection{Specialized Applications}
The SPP has been extended to address specialized scenarios, such as the multi-criteria shortest path problem, which considers trade-offs between multiple objectives, like cost and time \cite{hansen1980bicriterion}. In dynamic or time-dependent graphs, the edge weights may vary over time, necessitating new algorithms like the time-expanded shortest path \cite{cooke1966shortest}. Additionally, the rise of massive graphs in social networks and geographic information systems has spurred the development of approximate methods, such as graph sparsification and sketching \cite{cohen1997size}.

\subsection{Challenges in Dense and Weighted Graphs}
For dense graphs, where the number of edges approaches \(O(|V|^2)\), naive algorithms often become computationally expensive. Techniques like matrix-based methods for all-pairs shortest paths \cite{johnson1977efficient} or GPU-accelerated implementations \cite{katz2008all} have shown promise in reducing computational overhead.

\subsection{Emerging Trends}
Recent research has explored incorporating machine learning into shortest path computations. These methods predict likely paths or edge weights, complementing traditional algorithms in scenarios with incomplete or noisy data \cite{velivckovic2017graph}. Moreover, shortest path calculations are increasingly being integrated with clustering and community detection tasks to solve problems in network science and biology \cite{newman2003structure}.

\begin{table*}
	\caption{Table of notations}
	\begin{tabularx}{0.95\textwidth}{@{}XX@{}}
		\toprule
		$G$ & A weighted dense graph of N nodes, with each node indexed from 1 to N. Graph G is supposed to be directed, an undirected graph can be considered as a special case of a directed graph;\\ \midrule
		$G_{[1,n]}$ &A graph that is composed of the first $ n $ nodes of  $G$,  the nodes are indexed from 1 to n;\\ \midrule
		
		$G_{n+1}$ &The $( n+1) $th node of  $G$;\\\midrule
		$G$ + p &  Graph $G$ plus one new node $ p $. Since $ p \notin G $, if $G$ has N nodes, this new graph now has $ N + 1 $ nodes;\\\midrule
		
		$G - G_k$   &  A new graph by removing the kth node from graph $G$. If $G$ has N nodes, this new graph now has $ N - 1 $ nodes;\\\midrule
		
		$G^{'}$   &  A new graph by modifying weight of  one edge of graph $G$, or by removing a node from $ G $, or by adding a node to $ G $;\\\midrule
		
		$\Psi_{(i,j,n,G)}$  & $\Psi_{(i,j,n,G)}$ is a sequence from node i to node j, which has a total number of n nodes. All the nodes in the sequence must belong to graph $ G $. That is to say, it is a path starts from i, and ends with j. The path is not allowed to have loops, unless the start and the end is the same node;  \\\midrule

		$d(i,j)$  & $d(i,j)$ is the adjacency distance from node i to node j on graph G. Note the graph is directed;  \\\midrule
				
		$len(~\Psi_{(i,j,n,G)}~)$	  & $len(~\Psi_{(i,j,n,G)}~)$ is the length of path $\Psi_{(i,j,n,G)}$, which is the sum of edge weights on the path; \\\midrule
		
		$\Theta_{(i,j,G)}$	  & $\Theta_{(i,j,G)}$	 is the set of all paths from node i to node j. A path in $\Theta_{(i,j,G)}$ can have arbitrary number of nodes (at least two). All the nodes in a path must belong to graph $ G $; \\   \midrule
		
		$SPD(i,j~|~G)$ & 	 $SPD(i,j~|~G)$ is the shortest path distance (SPD) from node i to node j, where $ G $ is the   $ \boldsymbol{Context} $ of the shortest path distance. The $ \boldsymbol{Context} $ of a distance  is defined in \cite{liu2023min}. A node's shortest path distance to itself is always 0; \\ \midrule
	
		$ \mathbb{M}_{k,G_{[1,k]}} $ & $ \mathbb{M}_{k,G_{[1,k]}} $ is the pairwise shortest path distance matrix of $ G_{[1,k]} $, which has shape $ k \times k $. The shortest path distances are under the $ \boldsymbol{Context} $ of $ G_{[1,k]} $; \\ \midrule
		
		$ \mathbb{M}_{G} $ & The APSP matrix of $ G $, $ \mathbb{M}_{G}  = \mathbb{M}_{N,G_{[1,N]}} $; \\ 
 
		\bottomrule
	\end{tabularx}
\label{tab:notations}
\end{table*}

\section{Updating a large graph}
In a previous paper, we propose Algorithm 1 (MMJ distance by recursion) for solving the all pairs minimax path problem or widest path problem \cite{liu2023min}. It can also be revised to solve the APSP matrix of the shortest path problem, which also takes $ O(n^3) $ time.

\subsection{APSP after adding a node}
As discussed in Section 6.1 (Merit of Algorithm 1) of \cite{liu2024efficient},
Algorithm 1 (MMJ distance by recursion) has the advantage of warm-start capability. Consider the scenario where we have already computed the APSP matrix $ \mathbb{M}_{G} $ for a large graph $ G $, and a new point or node, $ p $, which is not part of $ G $, is introduced. The updated graph is referred to as $ G + p $. When determining the APSP matrix for $ G + p $, conventional algorithms like Floyd–Warshall algorithm or Dijkstra's algorithm might necessitate a computation beginning from scratch, which takes $ O(n^3) $ time. 

Algorithm 1 leverages the precomputed $ \mathbb{M}_{G} $ to facilitate the calculation of the new APSP matrix of graph $ G + p $, in accordance with the results of Theorem \ref{theorem2}, \ref{theo_2_back}, \ref{theorem3}, and Corollary \ref{corollary1}, \ref{corollary2}, which are revised from the theorems and corollary in Section 3.3 (Other properties of MMJ distance) of \cite{liu2023min}.  A warm-start of Algorithm 1 requires only $ O(n^2) $ time, which is much less expensive than a cold-start of conventional algorithms, which takes $ O(n^3) $ time.

\begin{theorem} 
	\label{theorem2}
	Suppose $ r \in \{1,2, \dots, n \}$,
	
	\begin{equation}
		f(t) =   d(G_{n+1},G_t) + SPD(G_t,G_r~|~G_{[1,n]})   
	\end{equation}		
	\begin{equation}
		\mathbb{X} =  \{        f(t)   ~ | ~  t \in \{1,2, \dots, n \}   \}
	\end{equation}	
	then, 	
	\begin{equation}
		SPD(G_{n+1},G_r~|~G_{[1,n+1]})  = min(\mathbb{X} )
	\end{equation}
	
\end{theorem}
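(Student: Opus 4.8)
The plan is to prove the identity by establishing the two inequalities $SPD(G_{n+1},G_r \mid G_{[1,n+1]}) \le \min(\mathbb{X})$ and $SPD(G_{n+1},G_r \mid G_{[1,n+1]}) \ge \min(\mathbb{X})$ separately. The guiding intuition is that any shortest path from the newly added node $G_{n+1}$ to an old node $G_r$ must begin by stepping from $G_{n+1}$ to some old node $G_t$, after which it never needs to return to $G_{n+1}$; the remainder is then just a shortest path living entirely inside the old graph $G_{[1,n]}$, whose distance $SPD(G_t,G_r \mid G_{[1,n]})$ is precisely an entry already stored in $\mathbb{M}_{G}$. Each value $f(t)$ is exactly the cost of such a ``first-edge-then-known-shortest-path'' route, so minimizing over $t$ should recover the true distance.

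For the upper bound, I would fix an arbitrary $t \in \{1,\dots,n\}$ and exhibit an explicit path in $G_{[1,n+1]}$: take the single edge $G_{n+1}\to G_t$ of length $d(G_{n+1},G_t)$, then append a shortest path from $G_t$ to $G_r$ realizing $SPD(G_t,G_r \mid G_{[1,n]})$. Since this appended path avoids $G_{n+1}$, prepending $G_{n+1}$ keeps the concatenation loop-free (the case $t=r$ degenerates to the single edge $G_{n+1}\to G_r$ of length $d(G_{n+1},G_r)$, with the trailing distance being $0$). This produces a valid path of length $f(t)$, whence $SPD(G_{n+1},G_r \mid G_{[1,n+1]}) \le f(t)$ for every $t$, and therefore $SPD(G_{n+1},G_r \mid G_{[1,n+1]}) \le \min(\mathbb{X})$.

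For the lower bound, I would take an actual shortest path $P$ from $G_{n+1}$ to $G_r$ in $G_{[1,n+1]}$ and inspect its first edge. Because the weight function $w$ is non-negative, $P$ may be assumed loop-free, so $G_{n+1}$ occurs only at the start; hence the first edge leads to some $G_t$ with $t \in \{1,\dots,n\}$, and the whole suffix from $G_t$ to $G_r$ omits $G_{n+1}$, i.e. it is a path in $G_{[1,n]}$. Its length is then at least $SPD(G_t,G_r \mid G_{[1,n]})$, so $len(P) \ge d(G_{n+1},G_t) + SPD(G_t,G_r \mid G_{[1,n]}) = f(t) \ge \min(\mathbb{X})$. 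Combining the two inequalities yields the claimed equality.

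The main obstacle — essentially the only delicate point — is justifying that the suffix of the optimal path really stays inside $G_{[1,n]}$, so that it is bounded below by the precomputed distance. This rests on the non-negativity of $w$, which guarantees a shortest path need not revisit its source $G_{n+1}$; with possibly negative weights a path could loop back through $G_{n+1}$ and the clean decomposition would fail. I would therefore make the reduction to loop-free paths explicit (invoking the loop-free convention for $\Psi$ and $\Theta$ recorded in the notation table) before splitting off the first edge, and only then carry out the suffix-bounding argument.
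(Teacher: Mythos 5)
Your proof is correct and takes essentially the same route as the paper's: the paper likewise decomposes the shortest path from $G_{n+1}$ to $G_r$ according to the first old node $G_t$ visited after leaving $G_{n+1}$, views each $f(t)$ as the best distance for that possibility, and takes the minimum over the $n$ possibilities. Your version merely makes rigorous what the paper's brief enumeration argument leaves implicit, namely the two-sided inequality and the appeal to non-negative weights (loop-free paths) to ensure the suffix of an optimal path stays inside $G_{[1,n]}$.
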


For the meaning of $ G_t,G_r,G_{[1,n]},   and ~  G_{[1,n+1]}$, see Table \ref{tab:notations}.

\begin{proof}
	There are $ n $ possibilities of the shortest path from $ G_{n+1} $ to $ G_r $, under the context of $ G_{[1,n+1]} $, set $ \mathbb{X} $ enumerate them all. Each element of $ \mathbb{X} $ is the shortest path distance of each possibility. Therefore, according to the definition of shortest path distance, $ SPD(G_{n+1},G_r~|~G_{[1,n+1]})  = min(\mathbb{X} ) $. The $ n $ possibilities are not mutually exclusive; multiple possibilities can happen simultaneously if the shortest paths are not unique.
\end{proof}

\begin{theorem} 
	\label{theo_2_back}
	Suppose $ r \in \{1,2, \dots, n \}$,
	
	\begin{equation}
	f(t) = SPD(G_r,G_t~|~G_{[1,n]}) +  d(G_t, G_{n+1})
	\end{equation}		
	\begin{equation}
	\mathbb{X} =  \{        f(t)   ~ | ~  t \in \{1,2, \dots, n \}   \}
	\end{equation}	
	then, 	
	\begin{equation}
	SPD(G_r, G_{n+1}~|~G_{[1,n+1]})  = min(\mathbb{X} )
	\end{equation}
	
\end{theorem}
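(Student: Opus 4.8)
The plan is to mirror the argument of Theorem~\ref{theorem2}, but now enumerating over the \emph{last} edge of the path rather than the first. The guiding observation is that any path from $G_r$ to the freshly added node $G_{n+1}$, taken under the context of $G_{[1,n+1]}$, must arrive at $G_{n+1}$ through exactly one penultimate node. Since paths carry no loops (see Table~\ref{tab:notations}) and $G_{n+1}$ is the terminal node, $G_{n+1}$ cannot appear anywhere except at the very end; hence the node immediately preceding it is some $G_t$ with $t \in \{1,2,\dots,n\}$.

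First I would fix such a penultimate index $t$ and split an arbitrary $G_r \to G_{n+1}$ path into its prefix, running from $G_r$ to $G_t$, and its final edge $G_t \to G_{n+1}$ of weight $d(G_t, G_{n+1})$. Because $G_{n+1}$ occurs only at the end, the prefix uses only nodes of $G_{[1,n]}$, so it is a legitimate path in $G_{[1,n]}$ and its length is at least $SPD(G_r, G_t~|~G_{[1,n]})$. The shortest $G_r \to G_{n+1}$ path whose penultimate node is $G_t$ therefore has length exactly $SPD(G_r, G_t~|~G_{[1,n]}) + d(G_t, G_{n+1}) = f(t)$, which is attained by concatenating the shortest prefix inside $G_{[1,n]}$ with the final edge.

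Next I would argue that the set $\mathbb{X} = \{\, f(t) \mid t \in \{1,\dots,n\} \,\}$ exhausts every possibility: each genuine $G_r \to G_{n+1}$ path is classified by its penultimate node $G_t$, and conversely each $f(t)$ is realizable by an actual path. By the definition of the shortest path distance, taking the minimum over all these cases yields $SPD(G_r, G_{n+1}~|~G_{[1,n+1]}) = \min(\mathbb{X})$. As in Theorem~\ref{theorem2}, the cases are not mutually exclusive when shortest paths are non-unique, but this does not affect the value of the minimum.

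The step I expect to require the most care is the claim that the prefix from $G_r$ to $G_t$ stays entirely within $G_{[1,n]}$, so that its cost may be replaced by a distance computed in the smaller context $G_{[1,n]}$. This rests on the no-loop convention for paths together with $G_{n+1}$ being the endpoint; once that is granted, the decomposition and the reduction to $SPD(\cdot,\cdot~|~G_{[1,n]})$ follow directly, and the result is seen to be the exact transpose of Theorem~\ref{theorem2}.
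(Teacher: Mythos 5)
Your proposal is correct and takes essentially the same route as the paper: the paper proves Theorem~\ref{theo_2_back} by mirroring Theorem~\ref{theorem2}, enumerating the $n$ possibilities according to which node $G_t$ is the neighbor of $G_{n+1}$ on the path (here the penultimate node, since the graph is directed and order matters) and taking the minimum of the corresponding distances $f(t)$. Your write-up simply makes explicit the prefix-decomposition and concatenation argument that the paper's one-line proof leaves implicit.
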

\begin{proof}
	The proof is similar to proof of Theorem \ref{theorem2}. Since we are dealing with a directed graph, the order of nodes in a distance notation matters.
\end{proof}

\begin{corollary}
	\label{corollary1}
	Suppose $ r \in \{1,2, \dots, N \}, p \notin G$,	
	\begin{equation}
		f(t) =  d(p,G_t) + SPD(G_t,G_r~|~G)   
	\end{equation}		
	\begin{equation}
		\mathbb{X} =  \{        f(t)   ~ | ~  t \in \{1,2, \dots, N \}   \}
	\end{equation}		
	then,	
	\begin{equation}
		SPD(p,G_r~|~G + p)  = min(\mathbb{X} )
	\end{equation}
\end{corollary}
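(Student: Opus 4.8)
The plan is to recognize this statement as the direct specialization of Theorem \ref{theorem2} in which the base graph $G_{[1,n]}$ is taken to be the entire graph $G$ (so that $n = N$) and the newly introduced node $G_{n+1}$ is renamed $p$. Under this identification the formula $f(t) = d(p,G_t) + SPD(G_t,G_r~|~G)$ and the target conclusion $SPD(p,G_r~|~G+p) = min(\mathbb{X})$ coincide exactly with the hypotheses and conclusion of Theorem \ref{theorem2}; the only difference is cosmetic, namely that the added vertex carries the label $p$ instead of the positional index $G_{n+1}$, and Theorem \ref{theorem2} never uses that index in an essential way. Thus the cleanest route is simply to invoke Theorem \ref{theorem2} after this relabeling, and I would first state that correspondence explicitly.

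For completeness I would also give the self-contained enumeration argument, mirroring the proof of Theorem \ref{theorem2}. The key structural observation is that $p$ is a freshly added vertex, so every path from $p$ to $G_r$ in $G+p$ must take its first step along an edge leaving $p$, reaching some node $G_t \in G$ at cost $d(p,G_t)$. From $G_t$ onward the path travels to $G_r$ using only vertices of $G$, and the least possible cost of this remaining portion is by definition $SPD(G_t,G_r~|~G)$. Summing, the best path whose first hop lands on $G_t$ has length $f(t)$; ranging over all $N$ choices of $t$ and taking the minimum over the resulting set $\mathbb{X}$ yields the shortest path distance from $p$ to $G_r$.

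The one point requiring care---and the main obstacle in making the enumeration fully rigorous---is justifying that an optimal continuation from $G_t$ to $G_r$ never needs to revisit $p$, so that its value is genuinely the in-context distance $SPD(G_t,G_r~|~G)$ rather than some quantity that could exploit $p$ as an intermediate vertex. Because the weight function $w$ is non-negative, any detour returning to $p$ can only add to the total length, so no shortest path from $G_t$ to $G_r$ passes through $p$; this is precisely what allows the context to be restricted from $G+p$ back to $G$ for the tail of the path. I would note, exactly as in Theorem \ref{theorem2}, that the $N$ cases indexed by $t$ are not mutually exclusive when shortest paths are non-unique, but this overlap does not affect the correctness of taking a minimum over $\mathbb{X}$.
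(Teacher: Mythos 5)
Your proposal matches the paper's proof, which likewise disposes of the corollary by directly invoking Theorem \ref{theorem2} under the identification $n = N$ and $G_{n+1} = p$. The additional self-contained enumeration you sketch (including the non-negativity argument for why the tail of the path never revisits $p$) simply expands the proof of Theorem \ref{theorem2} that the paper already relies on, so the route is essentially the same, just with more detail made explicit.
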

For the meaning of $ G + p $, see Table \ref{tab:notations}.
\begin{proof}
	The proof follows the conclusion of Theorem \ref{theorem2}.
\end{proof}

\begin{corollary}
	\label{corollary2}
	Suppose $ r \in \{1,2, \dots, N \}, p \notin G$,	
	\begin{equation}
	f(t) = SPD(G_r, G_t~|~G) + d(G_t, p)    
	\end{equation}		
	\begin{equation}
	\mathbb{X} =  \{        f(t)   ~ | ~  t \in \{1,2, \dots, N \}   \}
	\end{equation}		
	then,	
	\begin{equation}
	SPD(G_r, p~|~G + p)  = min(\mathbb{X} )
	\end{equation}
\end{corollary}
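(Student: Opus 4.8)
The plan is to obtain Corollary \ref{corollary2} as a direct specialization of Theorem \ref{theo_2_back}, mirroring exactly the way Corollary \ref{corollary1} descends from Theorem \ref{theorem2}. Theorem \ref{theo_2_back} governs the effect of appending an $(n+1)$th node $G_{n+1}$ to the subgraph $G_{[1,n]}$ and computes the distance \emph{from} $G_r$ \emph{to} the new node; Corollary \ref{corollary2} is simply the instance in which the subgraph is all of $G$ and the appended node is the externally supplied point $p$.

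First I would fix the identification $n = N$, so that $G_{[1,n]} = G_{[1,N]} = G$, and relabel the incoming node as $G_{n+1} = G_{N+1} := p$. Under this relabeling the augmented graph $G_{[1,n+1]}$ is precisely $G + p$. Because shortest path distances depend only on the graph's edges and weights and not on the particular indices assigned to nodes, assigning $p$ the fresh index $N+1$ is without loss of generality, and the hypothesis $p \notin G$ guarantees it is a genuinely new vertex.

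Second, I would substitute this identification termwise into Theorem \ref{theo_2_back}. The function becomes $f(t) = SPD(G_r, G_t \mid G) + d(G_t, p)$, matching the corollary's definition; the set $\mathbb{X}$ then ranges over $t \in \{1, \dots, N\}$; and the left-hand side $SPD(G_r, G_{n+1} \mid G_{[1,n+1]})$ becomes $SPD(G_r, p \mid G + p)$. The conclusion $SPD(G_r, p \mid G + p) = \min(\mathbb{X})$ is then immediate.

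The only point requiring care, and the closest thing to an obstacle, is confirming that the enumeration underlying Theorem \ref{theo_2_back} remains complete in this setting: every $G_r$-to-$p$ path in $G + p$ must be accounted for by some $f(t)$. This holds because a loop-free path terminating at the freshly added node $p$ can contain $p$ only as its final vertex, so its last edge runs from some existing node $G_t$ to $p$, while the prefix from $G_r$ to $G_t$ stays entirely within $G$. The shortest such prefix has length $SPD(G_r, G_t \mid G)$, hence the cheapest $G_r$-to-$p$ path with penultimate node $G_t$ costs exactly $f(t)$; minimizing over the penultimate node $G_t$ returns $\min(\mathbb{X})$. Since Theorem \ref{theo_2_back} already packages this reasoning, invoking it directly suffices and no further work is required.
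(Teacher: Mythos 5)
Your proposal is correct and follows essentially the same route as the paper, which likewise proves Corollary \ref{corollary2} by direct specialization of Theorem \ref{theo_2_back} with the new node $p$ playing the role of $G_{n+1}$ and $G_{[1,n]} = G$. Your added justification of the completeness of the enumeration (every $G_r$-to-$p$ path ends with an edge $G_t \to p$) is a sound elaboration of what the paper leaves implicit.
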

\begin{proof}
	The proof follows the conclusion of Theorem \ref{theo_2_back}.
\end{proof}

\begin{theorem} 
	\label{theorem3}
	Suppose $ i,j \in \{1,2, \dots, n \}$,	
	\begin{equation}
		x_1 = SPD(G_{i},G_j~|~G_{[1,n]})
	\end{equation}		
	\begin{equation}
		t_1 = SPD(G_i, G_{n+1}~|~G_{[1,n+1]})
	\end{equation}		
	\begin{equation}
		t_2 = SPD(G_{n+1},G_j~|~G_{[1,n+1]})
	\end{equation}	
	\begin{equation}
		x_2 = t_1 +  t_2
	\end{equation}		
	then,	
	\begin{equation}
		SPD(G_{i},G_j~|~G_{[1,n+1]})  = min(x_1 , ~  x_2)
	\end{equation}
	
\end{theorem}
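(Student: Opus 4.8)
The plan is to prove the identity by a two-sided inequality, organized around the natural dichotomy: any path from $G_i$ to $G_j$ in $G_{[1,n+1]}$ either avoids the newly added node $G_{n+1}$ or passes through it. The first family of paths lives entirely in $G_{[1,n]}$ and is therefore governed by $x_1$, while the second family must be routed through $G_{n+1}$ and is governed by $t_1 + t_2 = x_2$. I would show $SPD(G_i,G_j~|~G_{[1,n+1]}) \le \min(x_1,x_2)$ and $SPD(G_i,G_j~|~G_{[1,n+1]}) \ge \min(x_1,x_2)$ separately, then conclude equality.

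For the upper bound, first note that adding the node $G_{n+1}$ does not remove or alter any edge among the first $n$ nodes, so every path realizing $x_1$ in $G_{[1,n]}$ remains a valid path in $G_{[1,n+1]}$; hence $SPD(G_i,G_j~|~G_{[1,n+1]}) \le x_1$. Next, take a shortest path $Q_1$ from $G_i$ to $G_{n+1}$ of length $t_1$ and a shortest path $Q_2$ from $G_{n+1}$ to $G_j$ of length $t_2$, both under the context $G_{[1,n+1]}$. Concatenating $Q_1$ and $Q_2$ yields a walk from $G_i$ to $G_j$ of length $t_1+t_2$; should this walk repeat a vertex, deleting the intervening loop produces a genuine loop-free path whose length does not increase, since all edge weights are non-negative. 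This gives $SPD(G_i,G_j~|~G_{[1,n+1]}) \le t_1+t_2 = x_2$, and therefore $SPD(G_i,G_j~|~G_{[1,n+1]}) \le \min(x_1,x_2)$.

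For the lower bound, I would argue on an optimal path $P^{*}$ realizing $SPD(G_i,G_j~|~G_{[1,n+1]})$ and split into two cases. If $P^{*}$ does not visit $G_{n+1}$, then $P^{*}$ uses only the first $n$ nodes, so it is a path in $G_{[1,n]}$ and $len(P^{*}) \ge x_1$. If $P^{*}$ does visit $G_{n+1}$, then, because paths are loop-free, it visits $G_{n+1}$ exactly once; cutting $P^{*}$ at that single occurrence decomposes it into a $G_i$-to-$G_{n+1}$ segment and a $G_{n+1}$-to-$G_j$ segment, whose lengths are at least $t_1$ and $t_2$ respectively, so $len(P^{*}) \ge t_1+t_2 = x_2$. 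In either case $len(P^{*}) \ge \min(x_1,x_2)$, which is exactly the reverse inequality. Combining the two bounds yields the claimed equality.

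The routine content above hides one genuine subtlety, which I expect to be the main obstacle: the interaction between the additive decomposition and the loop-free constraint built into the notation for $\Psi$. In the upper bound the concatenation of two optimal segments need not be simple, so I must invoke loop deletion together with non-negativity of weights to guarantee a valid path no longer than $t_1+t_2$; in the lower bound I must justify that an optimal path through $G_{n+1}$ meets it exactly once, so that the split is unambiguous and both pieces are themselves admissible paths bounded below by $t_1$ and $t_2$. The directed setting is handled automatically, since each segment respects orientation and $t_1$, $t_2$ are the corresponding directed distances, and the degenerate case $i=j$ is covered by the convention that a node's distance to itself is $0$, forcing $x_1 = 0 = \min(x_1,x_2)$.
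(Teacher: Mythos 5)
Your proof is correct and follows essentially the same route as the paper's: the dichotomy between shortest paths from $G_i$ to $G_j$ that avoid $G_{n+1}$ and those that pass through it, with $x_1$ governing the first case and $x_2 = t_1 + t_2$ the second. The only difference is that you make this rigorous as a two-sided inequality, explicitly handling the loop removal after concatenating the two optimal segments and the splitting of a loop-free optimal path at its unique visit to $G_{n+1}$, details the paper's proof leaves implicit.
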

\begin{proof}
	If the shortest paths are not unique, there are two possibilities for the shortest path from $ G_{i} $ to $ G_j $, under the context of $ G_{[1,n+1]} $: 
	\begin{enumerate}
		\item There exists one shortest path from $ G_{i} $ to $ G_j $ which does not contain node $ G_{n+1} $, which means $ G_{n+1} $ is not necessary for the shortest path from $ G_{i} $ to $ G_j $, under the context of $ G_{[1,n+1]} $. That is to say:		
		\begin{equation*}
		SPD(G_{i},G_j~|~G_{[1,n+1]}) = SPD(G_{i},G_j~|~G_{[1,n]})
		\end{equation*}
		\item All the shortest paths from $ G_{i} $ to $ G_j $ must contain node $ G_{n+1} $, which means $ G_{n+1} $ is necessary for the shortest path  from $ G_{i} $ to $ G_j $, under the context of $ G_{[1,n+1]} $. That is to say:	
		\begin{equation*}
		SPD(G_{i},G_j~|~G_{[1,n+1]}) \neq SPD(G_{i},G_j~|~G_{[1,n]})
		\end{equation*}
	\end{enumerate}
	  $ x_1 $ is the SPD of the first possibility; $ x_2 $ is the SPD of the second possibility. Therefore, according to the definition of shortest path distance, $ SPD(G_{i},G_j~|~G_{[1,n+1]})  = min(x_1 , ~  x_2) $. If the shortest path is  unique, the reasoning still holds. The two possibilities are mutually exclusive; they cannot happen simultaneously.
\end{proof}

\begin{theorem} 
	\label{theorem_spd_4}
	Suppose $ i,j \in \{1,2, \dots, n \}$,	
		\begin{equation}
	t_1 = SPD(G_i, G_{n+1}~|~G_{[1,n+1]})
	\end{equation}		
	\begin{equation}
	t_2 = SPD(G_{n+1},G_j~|~G_{[1,n+1]})
	\end{equation}	
	\quad if,
	\begin{equation}
	SPD(G_{i},G_j~|~G_{[1,n+1]}) < t_1 +  t_2
	\label{theo_4_if}
	\end{equation}		
 	
	then,	
	\begin{equation}
	SPD(G_{i},G_j~|~G_{[1,n+1]}) = SPD(G_{i},G_j~|~G_{[1,n]})
	\label{theo_4_conclu}
	\end{equation}
	
	which means $ G_{n+1} $ is not necessary for the shortest path from $ G_{i} $ to $ G_j $, under the context of $ G_{[1,n+1]} $. 
	
\end{theorem}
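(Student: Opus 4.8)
The plan is to derive this directly from Theorem \ref{theorem3}, of which the present statement is essentially a sharpened case-selection. First I would recall that Theorem \ref{theorem3} already establishes
\begin{equation*}
	SPD(G_{i},G_j~|~G_{[1,n+1]}) = \min(x_1, x_2),
\end{equation*}
where $x_1 = SPD(G_{i},G_j~|~G_{[1,n]})$ and $x_2 = t_1 + t_2$. Since the hypotheses of the present theorem name exactly these $t_1$ and $t_2$, the only remaining work is to identify which of the two arguments realizes the minimum under the stated strict inequality.

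Next I would invoke hypothesis (\ref{theo_4_if}), namely $SPD(G_{i},G_j~|~G_{[1,n+1]}) < t_1 + t_2 = x_2$. Combining this with the equality from Theorem \ref{theorem3} gives $\min(x_1, x_2) < x_2$. The key observation is then purely arithmetic: if the minimum of two quantities is strictly smaller than the second quantity, the minimum cannot be attained at that second quantity, so it must be attained at the first; equivalently $x_1 < x_2$ and hence $\min(x_1, x_2) = x_1$. Substituting back yields $SPD(G_{i},G_j~|~G_{[1,n+1]}) = x_1 = SPD(G_{i},G_j~|~G_{[1,n]})$, which is exactly conclusion (\ref{theo_4_conclu}).

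An equivalent and perhaps more illuminating route is to argue via the two mutually exclusive cases isolated in the proof of Theorem \ref{theorem3}: the case in which every shortest path passes through $G_{n+1}$ forces $SPD(G_{i},G_j~|~G_{[1,n+1]}) = t_1 + t_2$, which the strict inequality in the hypothesis directly contradicts. Hence the complementary case must hold, meaning some shortest path from $G_i$ to $G_j$ avoids $G_{n+1}$ entirely, so the distance is unchanged from its value in $G_{[1,n]}$. I do not anticipate a genuine obstacle here, since the result is a one-line consequence of Theorem \ref{theorem3}; the only point requiring care is the logical step that a strict inequality against $x_2$ excludes the through-$G_{n+1}$ case, and ensuring the argument remains valid when the shortest path is not unique, which is already guaranteed by the mutual exclusivity established in Theorem \ref{theorem3}.
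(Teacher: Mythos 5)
Your proposal is correct and takes essentially the same approach as the paper: the paper's proof is exactly your second route, invoking the two mutually exclusive possibilities from the proof of Theorem \ref{theorem3} and using the strict inequality (\ref{theo_4_if}) to rule out the case where every shortest path passes through $G_{n+1}$. Your primary min-arithmetic derivation from the conclusion of Theorem \ref{theorem3} is just a compact restatement of that same argument, so there is no substantive difference.
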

\begin{proof}
As discussed in the proof of Theorem \ref{theorem3}, there are two possibilities for the shortest path from $ G_{i} $ to $ G_j $, under the context of $ G_{[1,n+1]} $. And the two possibilities are mutually exclusive; they cannot happen simultaneously. We only need to negate possibility $ (2) $, then we can arrive to the conclusion of Equation \ref{theo_4_conclu}. Suppose possibility $ (2) $ happens, then $ G_{n+1} $ is necessary for the shortest path  from $ G_{i} $ to $ G_j $, under the context of $ G_{[1,n+1]} $. Then $ SPD(G_{i},G_j~|~G_{[1,n+1]}) = t_1 +  t_2 $, which is contradicted to Equation \ref{theo_4_if}. Therefore, possibility $ (2) $ cannot happen; only possibility $ (1) $ can happen. 
\end{proof}

\begin{corollary}
	\label{corollary3}
Suppose $ i,~j,~k\in \{1,2, \dots, N \}$,	$ k \neq i$, $ k \neq j $,
\begin{equation}
t_1 = SPD(G_i, G_{k}~|~G)
\end{equation}		
\begin{equation}
t_2 = SPD(G_{k},G_j~|~G)
\end{equation}	
\quad if,
\begin{equation}
SPD(G_{i},G_j~|~G) < t_1 +  t_2
\end{equation}		

then,	
\begin{equation}
SPD(G_{i},G_j~|~G) = SPD(G_{i},G_j~|~G - G_{k})
\end{equation}

which means $ G_{k} $ is not necessary for the shortest path from $ G_{i} $ to $ G_j $, under the context of $ G$. 
\end{corollary}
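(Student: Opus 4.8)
The plan is to recognize that this corollary is the node-removal restatement of Theorem \ref{theorem_spd_4} and to derive it by a relabeling reduction together with one observation about matching contexts. In Theorem \ref{theorem_spd_4} the node being tested for necessity is $G_{n+1}$, the last-indexed node, and the two graphs compared are $G_{[1,n+1]}$ (containing the node) and $G_{[1,n]}$ (without it). Here the tested node $G_k$ carries an arbitrary interior index and the comparison is between $G$ and $G - G_k$. First I would observe that shortest-path distances depend only on the graph's edge structure, not on how the nodes happen to be indexed, so I may relabel the $N$ nodes of $G$ so that $G_k$ receives the largest index $N$, leaving the relative structure among the other nodes untouched. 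Under this relabeling, $G$ plays the role of $G_{[1,N]}$, the removed-node graph $G - G_k$ plays the role of $G_{[1,N-1]}$, and setting $n = N-1$ makes $G = G_{[1,n+1]}$ and $G - G_k = G_{[1,n]}$.

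The second step is to check that the hypotheses line up. The quantities $t_1 = SPD(G_i, G_k \mid G)$ and $t_2 = SPD(G_k, G_j \mid G)$ are measured in the context of the full graph $G$, which matches exactly the definitions $t_1 = SPD(G_i, G_{n+1} \mid G_{[1,n+1]})$ and $t_2 = SPD(G_{n+1}, G_j \mid G_{[1,n+1]})$ used in Theorem \ref{theorem_spd_4}, once $G_k$ is identified with $G_{n+1}$ and $G$ with $G_{[1,n+1]}$. Thus the strict inequality $SPD(G_i, G_j \mid G) < t_1 + t_2$ is precisely hypothesis (\ref{theo_4_if}), and its conclusion (\ref{theo_4_conclu}), namely $SPD(G_i, G_j \mid G_{[1,n+1]}) = SPD(G_i, G_j \mid G_{[1,n]})$, becomes exactly $SPD(G_i, G_j \mid G) = SPD(G_i, G_j \mid G - G_k)$. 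The hypotheses $k \neq i$ and $k \neq j$ are what guarantee that $G_i$ and $G_j$ still exist in $G - G_k$ and that $t_1, t_2$ are distances between genuinely distinct endpoints.

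If one instead prefers a self-contained argument, I would repeat the two-case dichotomy from the proof of Theorem \ref{theorem3}: either some shortest $G_i \to G_j$ path in $G$ avoids $G_k$, or every shortest such path passes through $G_k$. In the latter case any shortest path decomposes at $G_k$, so by the triangle inequality its length equals $t_1 + t_2$, contradicting the hypothesis; hence a shortest path avoiding $G_k$ must exist, this path survives in $G - G_k$, and since deleting a node can never shorten any distance we obtain the desired equality. I expect the only genuine subtlety — and the one point to state carefully — is the legitimacy of the relabeling, i.e.\ that the subscripts in the $G_{[1,n]}$-style notation are merely a bookkeeping device and that permuting them leaves every $SPD$ value invariant; once that is granted, the corollary is an immediate instance of Theorem \ref{theorem_spd_4}.
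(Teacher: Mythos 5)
Your proposal is correct and takes essentially the same route as the paper: the paper's proof is simply to re-index the nodes of $G$ so that $G_k$ becomes the highest-indexed node and then invoke Theorem \ref{theorem_spd_4}, which is exactly your relabeling reduction (your additional self-contained dichotomy argument is a sound but optional supplement). You merely spell out more explicitly than the paper does why the relabeling is legitimate and why the hypotheses match up.
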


\begin{proof}
	The proof follows the conclusion of Theorem \ref{theorem_spd_4}. We just re-index the nodes in graph $ G $. 
\end{proof}

\subsection{APSP after removing a node}
Sometimes, we need to remove a node from a large graph. Suppose we removed the kth node from graph $ G $, the new graph is noted $G - G_k$ (Table \ref{tab:notations}). A cold-start of conventional algorithms for calculating the APSP matrix of graph $G - G_k$ will take $ O(n^3) $ time. 

However, we can take a smarter method to make the computation less expensive. Firstly, we make a $ need\_update\_list $ to record which nodes' shortest path distance (SPD)  to others are affected by the deletion. E.g., in Figure \ref{fig:graph1}, if we delete Node C, we get Figure \ref{fig:graph2} (since the matrices are symmetric, we only show half of them). 
Removing a node is equivalent to set the node's distances (to and from)  to other nodes to infinity. The $ need\_update\_list $ is totally empty for Figure \ref{fig:graph2}. Because none of the pair-wise shortest path distances are affected by removing Node C. Except Node C itself. E.g., 
\begin{align*} 
SPD(A, B~|~G) &= SPD(A, B~|~G - C) \\ 
SPD(A, D~|~G) &= SPD(A, D~|~G - C)\\
SPD(B, A~|~G) &= SPD(B, A~|~G - C) \\ 
SPD(B, D~|~G) &= SPD(B, D~|~G - C)\\
SPD(D, A~|~G) &= SPD(D, A~|~G - C) \\ 
SPD(D, B~|~G) &= SPD(D, B~|~G - C)
\end{align*}

The $ need\_update\_list $ for Figure \ref{fig:graph2} looks like this:
\begin{align*} 
Node ~ A &: empty\\
Node ~ B &: empty\\
Node ~ D &: empty
\end{align*}

In Figure \ref{fig:graph3}, we removed Node B. Some of the remaining pair-wise shortest path distances are affected, some are not. E.g.,
\begin{align*} 
SPD(A, C~|~G) &\neq SPD(A, C~|~G - B) \\ 
SPD(A, D~|~G) &\neq SPD(A, D~|~G - B)\\
SPD(C, A~|~G) &\neq SPD(C, A~|~G - B) \\ 
SPD(C, D~|~G) &= SPD(C, D~|~G - B)\\
SPD(D, A~|~G) &\neq SPD(D, A~|~G - B) \\ 
SPD(D, C~|~G) &= SPD(D, C~|~G - B)
\end{align*}

The $ need\_update\_list $ for Figure \ref{fig:graph3} looks like this:
\begin{align*} 
Node ~ A &: ~ [A,~C], ~[A,~D]\\
Node ~ C &: ~ [C,~A]\\
Node ~ D &: ~ [D,~A]
\end{align*}

We use Theorem \ref{theorem_spd_4} to construct the $ need\_update\_list $, by setting the node to be removed as $ G_{n+1} $.
If
\begin{equation*}
SPD(G_{i},G_j~|~G_{[1,n+1]}) < t_1 +  t_2
\label{theo_4_if_less}
\end{equation*}	
which means $ G_{n+1} $ is not necessary for the shortest path from $ G_{i} $ to $ G_j $, under the context of $ G_{[1,n+1]} $, then $ G_{n+1} $ can be safely removed from the graph, without affecting the shortest path distance from $ G_{i} $ to $ G_j $. So, node pair $ [G_{i},~G_j] $ will not appear in the $ need\_update\_list $. Otherwise, if
\begin{equation*}
SPD(G_{i},G_j~|~G_{[1,n+1]}) = t_1 +  t_2
\label{theo_4_if_equal}
\end{equation*}	
then we are not sure whether the shortest path distance from $ G_{i} $ to $ G_j $ will be affected by removing node $ G_{n+1} $ from the graph; the SPD from $ G_{i} $ to $ G_j $ needs to be re-calculated after the removing. So, node pair $ [G_{i},~G_j] $ will be appended to the $ need\_update\_list $ of node $ G_{i} $. Corollary \ref{corollary3} makes it easier to understand than Theorem \ref{theorem_spd_4}. Constructing  the $ need\_update\_list $ only needs $ O(n^2) $ time.

	\begin{figure*} 
	\begin{subfigure}{0.33\textwidth}
	\includegraphics[width=\linewidth]{./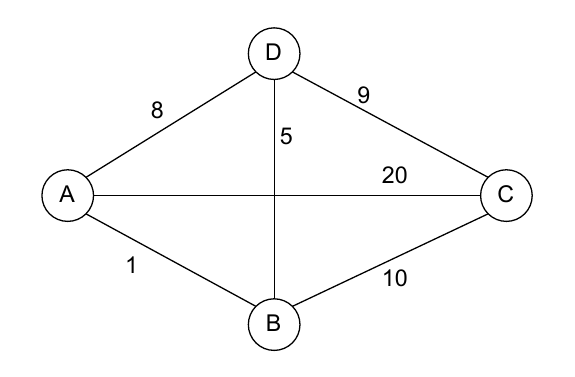}
	\caption{Graph}   \label{fig:graph11}
\end{subfigure}    
	\begin{subfigure}{0.33\textwidth}
		\includegraphics[width=0.8\linewidth]{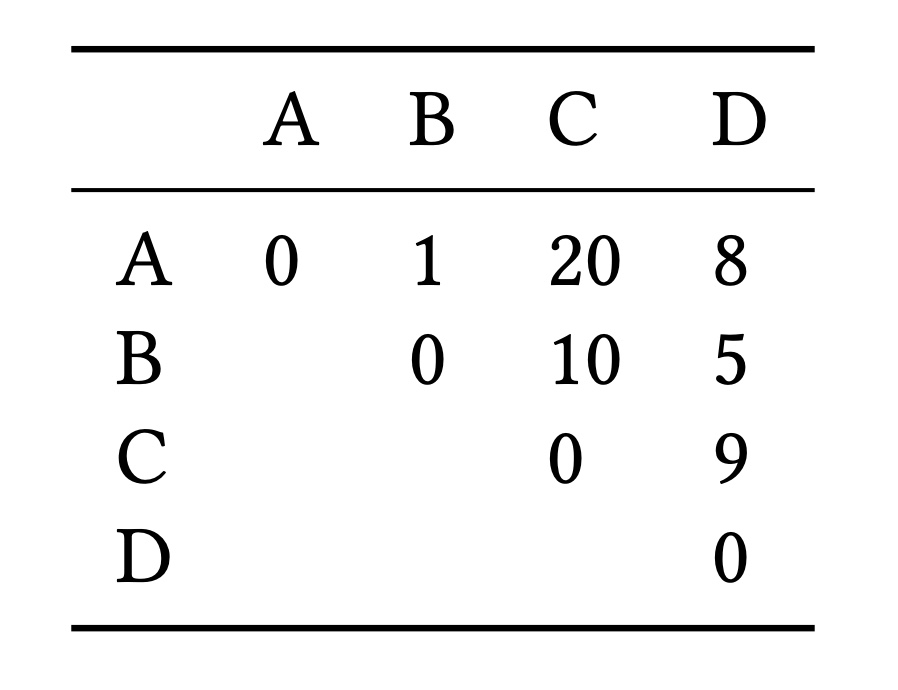}
		\caption{adjacency matrix}   \label{fig:graph12}
	\end{subfigure}    
	\begin{subfigure}{0.33\textwidth}
	\includegraphics[width=0.8\linewidth]{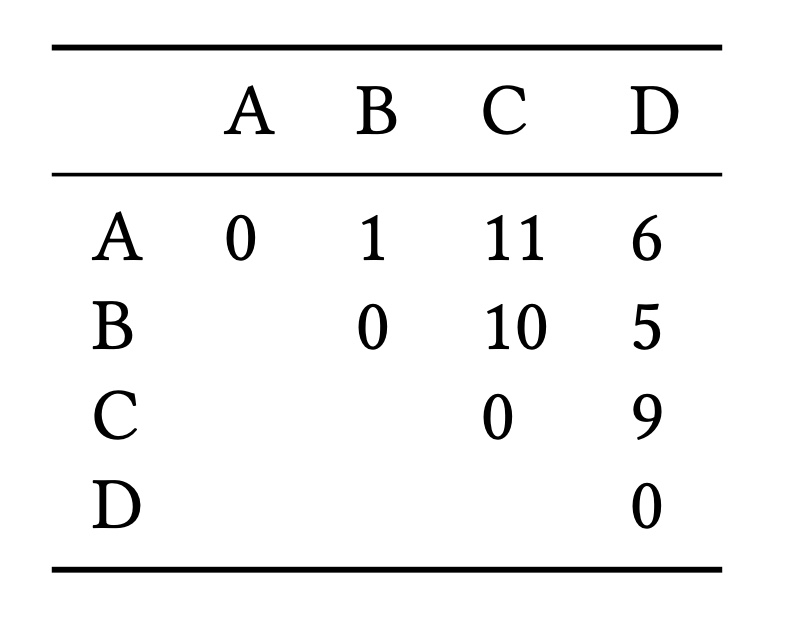}
	\caption{APSP matrix}   \label{fig:graph13}
\end{subfigure} 
 
	\caption{Graph, adjacency matrix, and APSP matrix.}
	 \label{fig:graph1}
\end{figure*}

	\begin{figure*} 
	\begin{subfigure}{0.33\textwidth}
	\includegraphics[width=\linewidth]{./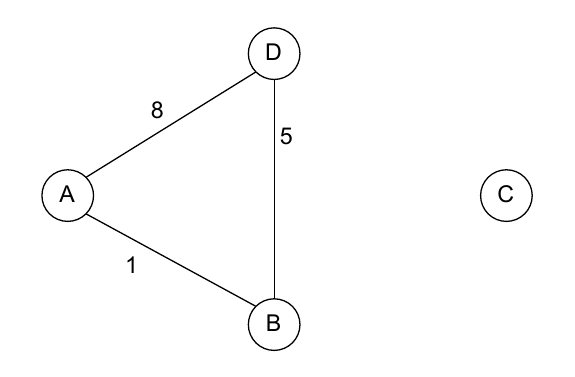}
	\caption{Graph}   \label{fig:graph21}
\end{subfigure}    
	\begin{subfigure}{0.33\textwidth}
		\includegraphics[width=0.8\linewidth]{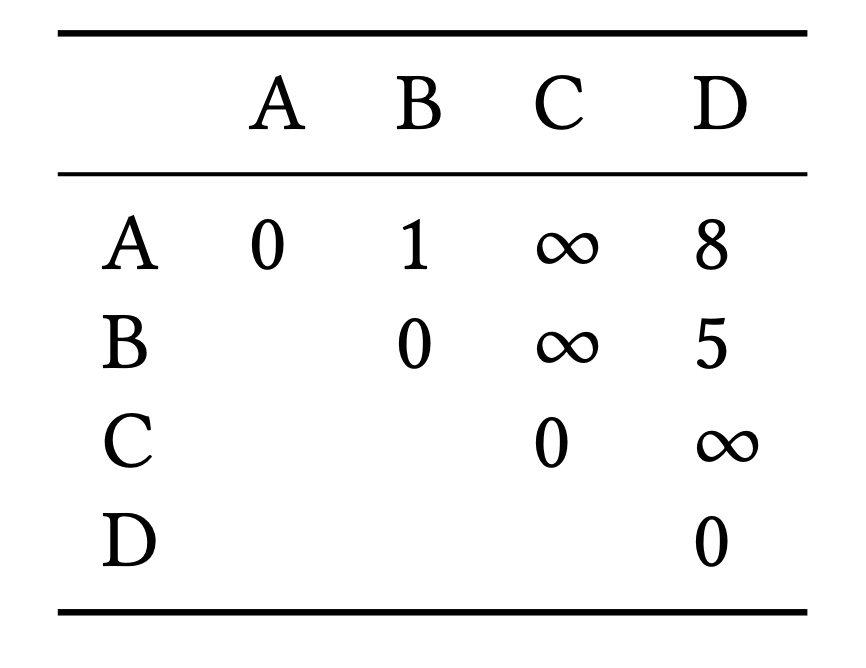}
		\caption{adjacency matrix}   \label{fig:graph22}
	\end{subfigure}    
	\begin{subfigure}{0.33\textwidth}
	\includegraphics[width=0.8\linewidth]{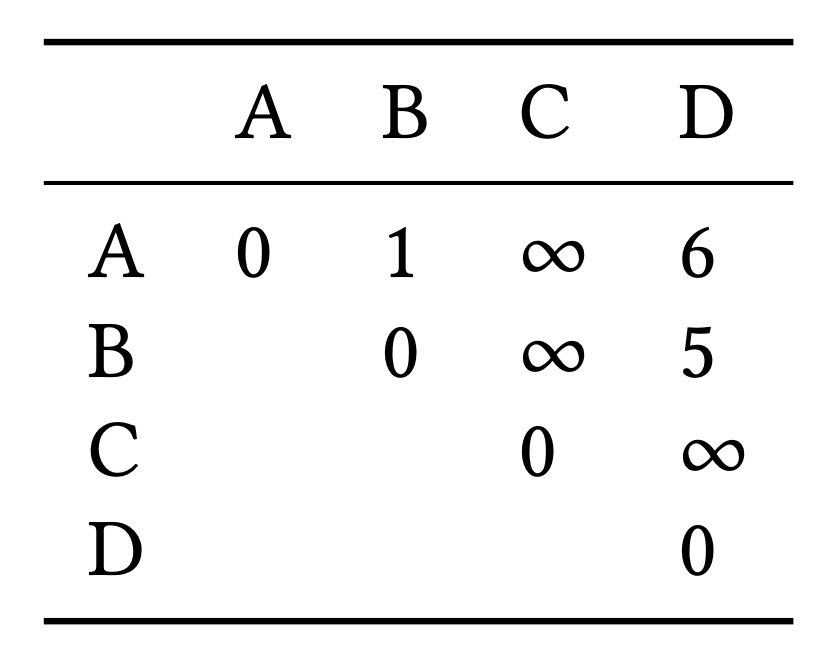}
	\caption{APSP matrix}   \label{fig:graph23}
\end{subfigure} 
 
	\caption{Graph, adjacency matrix, and APSP matrix, after removing node C.}
	 \label{fig:graph2}
\end{figure*}

	\begin{figure*} 
	\begin{subfigure}{0.33\textwidth}
	\includegraphics[width=\linewidth]{./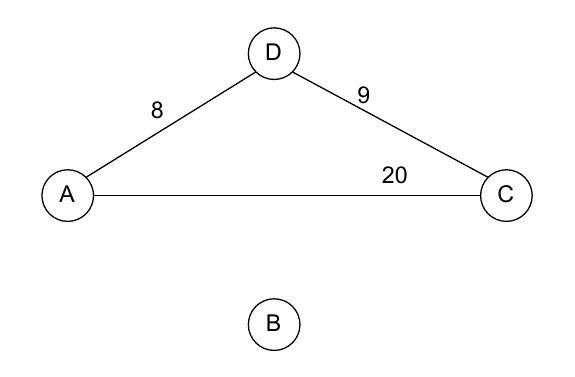}
	\caption{Graph}   \label{fig:graph31}
\end{subfigure}    
	\begin{subfigure}{0.33\textwidth}
		\includegraphics[width=0.8\linewidth]{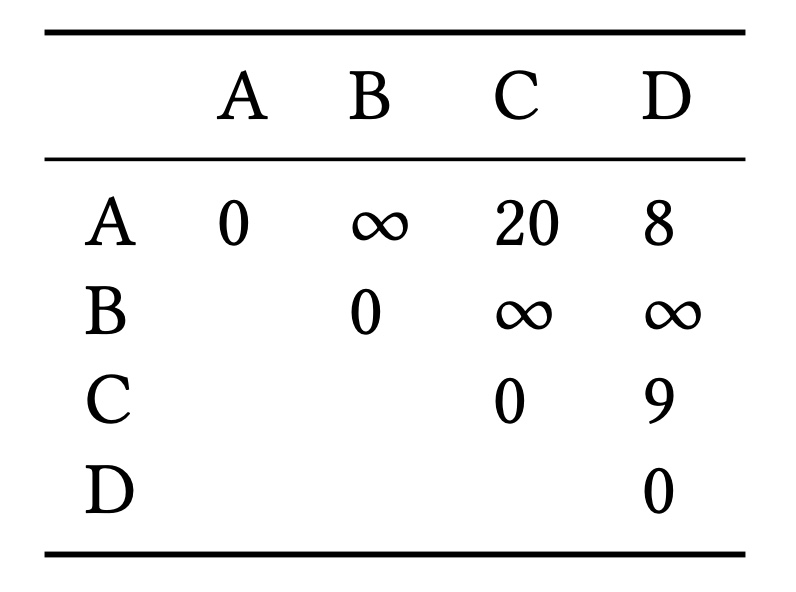}
		\caption{adjacency matrix}   \label{fig:graph32}
	\end{subfigure}    
	\begin{subfigure}{0.33\textwidth}
	\includegraphics[width=0.8\linewidth]{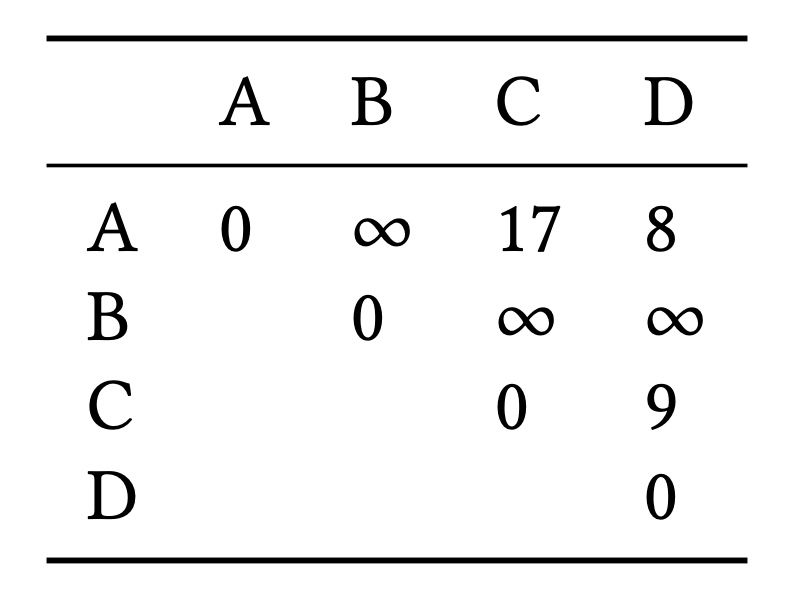}
	\caption{APSP matrix}   \label{fig:graph33}
\end{subfigure} 
 
	\caption{Graph, adjacency matrix, and APSP matrix, after removing node B.}
	 \label{fig:graph3}
\end{figure*}

\begin{definition}
	Cost for calculating the new APSP matrix after removing node $ G_k$ from $ G $.

	\begin{equation}
	C(G, \mathbb{M}_{G}, G_k) =  \frac{\Phi(\mathbb{M}_{G}, G_k)}{N - 1}	
	\label{equ:cost_defi}
	\end{equation}
	Where $ \mathbb{M}_{G} $ is the APSP matrix of graph $ G $; $ G_k $ is the node to be removed; $ \Phi(\mathbb{M}_{G}, G_k) $ is the number of non-empty items in the $ need\_update\_list $, after removing node $ G_k$ from $ G $; $ N $ is the number of vertices in graph $ G $. The range of $ C(G, \mathbb{M}_{G}, G_k)  $ is $ [0, 1] $.
	\label{def:cost_defi}
\end{definition}

We devise an algorithm for solving the APSP matrix after removing a node. In Algorithm \ref{alg:APSPremove}, we first construct the $ need\_update\_list $ after removing node $ G_k$, then use the $ need\_update\_list $ to calculate the $ Cost $ defined in Definition \ref{def:cost_defi}. If the $ Cost $ is larger than hyper-parameter $ \delta $, we just use the Floyd-Warshall algorithm to re-calculate the APSP matrix of the new graph $G - G_k$. If the $ Cost $ is small, we use Dijkstra's algorithm to calculate a node's distance to other nodes in the new graph $ G - G_k $. Hopefully, only a few nodes will be affected by removing node $ G_k$, therefore, saved time for calculating the new APSP matrix. 

So, removing a node from a graph is harder than adding a node to the graph, for calculating the APSP matrix. Adding a node only needs $ O(n^2) $ time even for the worst case. For removing a node, the best case complexity is $ O(n^2) $, the worst case complexity is $ O(n^3) $. E.g., in the $ need\_update\_list $ for Figure \ref{fig:graph2}, all the items are empty, the complexity for solving the new APSP matrix is  $ O(n^2) $, which is used for calculating the $ need\_update\_list $. In the $ need\_update\_list $ for Figure \ref{fig:graph3}, all the items are non-empty, the complexity is $ O(n^3) $.
\setcounter{algorithm}{4}
\begin{algorithm} 
	\caption{APSP after removing a node}
	\begin{algorithmic}[1]
		\Require{$G$, APSP of G: $ \mathbb{M}_{G} $,  node to be removed: $ G_k $, hyper-parameter: $\delta$} 
		\Ensure{$APSP ~of ~G - G_k:  \mathbb{M}_{G - G_k} $}
		\Statex
		\Function{APSP\_remove\_node}{$G$, $ \mathbb{M}_{G} $, $ G_k $, $\delta$}
		
		\State {$remaining\_node\_list$ $\gets$ $G - G_k$}
		\State {$need\_update\_list$ $\gets$ $cal\_need\_update\_list(G_k, \mathbb{M}_{G})$}
		
		\State {$Cost$ $\gets$ $cal\_cost\_of\_remove(need\_update\_list)$}

		\If{$ Cost >  \delta $}
		\State {$\mathbb{M}_{G - G_k}$ $\gets$ $Floyd\_Warshall(G - G_k)$}
		\State \Return {$\mathbb{M}_{G - G_k} $}
		\EndIf	

		\State {$\mathbb{M}_{G - G_k}$ $\gets$ $copy(\mathbb{M}_{G})$}
		\For{$i$ $ in $ $remaining\_node\_list$}
			\If{$ len(need\_update\_list[i]) > 0 $}
			\State {//We can stop early if the shortest path tree has covered all the nodes in $need\_update\_list[i]$.} 			
				\State {$temp$ $\gets$ $dijkstra\_one\_to\_all\_others(G - G_k,~i)$} 
				\For{$j$ $ in $ $need\_update\_list[i]$}
					\State {$\mathbb{M}_{G - G_k}[i,j]$ $\gets$ $temp[j]$}  					
				\EndFor			 
			\Else
				\State {$pass$}
			\EndIf      
		\EndFor 
		\State \Return {$\mathbb{M}_{G - G_k} $}
		\EndFunction
	\end{algorithmic}
	\label{alg:APSPremove}
\end{algorithm}

\subsection{APSP after modifying  an edge}
Modifying  an edge can be accomplished by removing one of the edge's vertices, then add the node back, with the edge being updated. So,  the best case complexity for modifying  an edge is $ O(n^2) $, the worst case complexity is $ O(n^3) $; the same complexity as removing a node.

Algorithm \ref{alg:APSPmodify} is devised to calculate the APSP matrix after modifying  an edge. It firstly remove a node associated with the edge from the graph, then add the node back, with the edge being updated. An edge is associated with two nodes. So, before removing a node, it calculates which node is cheaper to remove, then remove the cheaper one. 
\begin{algorithm} 
	\caption{APSP after modifying  an edge}
	\begin{algorithmic}[1]
		\Require{$G$, APSP of G: $ \mathbb{M}_{G} $,  edge nodes: $ e_n$, edge weight: $ e_w $, hyper-parameter: $\delta$} 
		\Ensure{APSP of new graph $ G^{'} $:  $\mathbb{M}_{G^{'}} $}
		\Statex
		\Function{APSP\_modify\_edge}{$G$, $ \mathbb{M}_{G} $, $ e_n $, $ e_w $, $ \delta $}
		\State {$i$ $\gets$ $e_n[0]$}
		\State {$j$ $\gets$ $e_n[1]$}
		\State {$Cost\_i$ $\gets$ $cal\_cost\_of\_remove(G, i)$}
		\State {$Cost\_j$ $\gets$ $cal\_cost\_of\_remove(G, j)$}
		\If{$ Cost\_i  <   Cost\_j $}
		\State {$G_k$ $\gets$ $i$}
		\Else
		\State {$G_k$ $\gets$ $j$}
		\EndIf	
		\State {//Calculate APSP matrix after removing node.}			
		\State {$\mathbb{M}_{G - G_k}$ $\gets$ $APSP\_remove\_node(G, \mathbb{M}_{G}, G_k, \delta)$}
		
		\State {//Add the node back, update the edge, then use warm-start of Algorithm 1 (MMJ distance by recursion) to calculate the new APSP matrix.} 
		\State {$G^{'}$ $\gets$ $cal\_updated\_graph(G, e_n, e_w)$}
		\State {$ \mathbb{M}_{G^{'}}$ $\gets$ $APSP\_add\_node(G^{'}, \mathbb{M}_{G - G_k})$}		
	
		\State \Return {$\mathbb{M}_{G^{'}} $}
		\EndFunction
	\end{algorithmic}
	\label{alg:APSPmodify}
\end{algorithm}

\section{warm-start calculation of shortest path}
We can carry out a warm-start calculation of the shortest path between two nodes, based on the already known APSP matrix and the conclusion of Theorem \ref{theorem_spd_4}.

Algorithm \ref{alg:sp_apsp} is devised for warm-start calculation of the shortest path between two nodes, based on the APSP matrix. It use the conclusion of Theorem \ref{theorem_spd_4} to exclude unnecessary nodes from node $ i $ to node $ j $, generate the $ candidate\_node\_list $; then form a small graph which is composed of nodes in $ candidate\_node\_list $; then use Dijkstra's algorithm to calculate the shortest path from node $ i $ to node $ j $, on the small graph; then translate the path into original node index. 

Since the $ candidate\_node\_list $ is usually very small, calculating the shortest path from node $ i $ to $ j $ on the small graph usually needs only $ O(1) $ time. So the average case complexity of Algorithm \ref{alg:sp_apsp} is $ O(n) $.

\begin{algorithm} 
	\caption{warm-start calculation of shortest path}
	\begin{algorithmic}[1]
	\Require{APSP of G: $ \mathbb{M}_{G} $,  Adjacency matrix: $ \mathbb{A}_{G} $,  start node: $ i$,  end node: $ j$}
		\Ensure{Shortest path from $ i $ to $ j $: $ path(i,~j) $}
		\Statex
		\Function{warm\_cal\_shortest\_path}{$ \mathbb{M}_{G} $, $ \mathbb{A}_{G} $, $ i$, $j$}
		\If{$ i == j $ }
		\State \Return {$[i] $}
		\EndIf
 
		\State {$remaining\_node\_list$ $\gets$ $G - i - j$}
		\State {$candidate\_node\_list$ $\gets$ $empty\_list$}
		\State {$candidate\_node\_list.append(i)$}
		\For{$t$ $ in $ $remaining\_node\_list$}
		\If{$ \mathbb{M}_{G}[i,~j]  <  \mathbb{M}_{G}[i,~t]  + \mathbb{M}_{G}[t,~j] $ }
		\State {$pass$}
		\Else
		\State {$candidate\_node\_list.append(t)$}
		\EndIf			
		\EndFor	
		\State {$candidate\_node\_list.append(j)$}
		\State {$K$ $\gets$ $len(candidate\_node\_list)$}
		
		\State {$small\_matrix$ $\gets$ $zeros((K, K))$}
		\For{$i,~m$ $ in $ $enumerate(candidate\_node\_list)$}
		\For{$j,~n$ $ in $ $enumerate(candidate\_node\_list)$}
		\State {$small\_matrix[i,~j]$ $\gets$ $\mathbb{A}_{G}[m,~n]$}
		\EndFor	
		\EndFor	
		\State {//Use Dijkstra's algorithm to calculate the path from node $ 0 $ to node $ K-1 $, on the graph defined by $small\_matrix$.}
		
		\State {$path$ $\gets$ $cal\_path\_by\_dijkstra(small\_matrix, 0, K-1)$}
		\State {//Translate the path into original node index.}	
		\State {$ path(i,~j) $ $\gets$ $[candidate\_node\_list[i]~ for ~i ~in~ path]$}

		\State \Return {$ path(i,~j) $}
		\EndFunction
	\end{algorithmic}
	\label{alg:sp_apsp}
\end{algorithm}
\subsection{Correctness proof of Algorithm \ref{alg:sp_apsp}}
The correctness of Algorithm \ref{alg:sp_apsp} follows the conclusion of Theorem \ref{al_correct}.
\begin{theorem} 
	\label{al_correct}
	The small graph which is composed of nodes in $ candidate\_node\_list $ in Algorithm \ref{alg:sp_apsp} contains all the shortest paths from node $ i $ to  $ j $ on graph $ G $.
\end{theorem}
\begin{proof}
We can divide  nodes  in graph $ G $  into two sets: nodes in  $ candidate\_node\_list $, noted $G_c$;  nodes not in  $ candidate\_node\_list $, noted  $G_r$. Suppose there exists a shortest path from node $ i $ to  $ j $ on graph $ G $ involves a node in $G_r$, the involved node is noted $ \xi $. The path is noted $ p(i,~\xi)  + p(\xi,~j)$. The APSP matrix of G is $ \mathbb{M}_{G} $. Since the length of path  $ p(i,~\xi)  + p(\xi,~j)$ is great than or equal to $ \mathbb{M}_{G}[i,~\xi]  + \mathbb{M}_{G}[\xi,~j] $, which is contradict to Step 9 of Algorithm \ref{alg:sp_apsp}, which says the shortest path distance from node $ i $ to  $ j $ on graph $ G $ is less than $ \mathbb{M}_{G}[i,~\xi]  + \mathbb{M}_{G}[\xi,~j] $. So,  a shortest path from node $ i $ to  $ j $ on graph $ G $ cannot involve a node in $G_r$, the correctness of Theorem \ref{al_correct} is proved.
\end{proof}

\subsection{All shortest paths between two nodes}
The generated $small\_matrix$ and $ candidate\_node\_list $ in Algorithm \ref{alg:sp_apsp} can be used to calculate all the shortest paths between two nodes on graph $ G $. Algorithm \ref{alg:sp_apsp_all} is devised for warm-start calculation of all the shortest paths between two nodes, based on the APSP matrix and conclusion of Theorem \ref{al_correct}.

We can even enumerate all the paths from node $ i $ to $ j $ to check if it is a shortest path, since the graph decided by $small\_matrix$ is small.
 
\begin{algorithm} 
	\caption{warm-start calculation of all shortest paths}
	\begin{algorithmic}[1]
	\Require{$small\_matrix$: $ \mathbb{M}_{s} $,  $candidate\_node\_list$: $ \mathbb{C}_{l} $,  start node: $ i$,  end node: $ j$}
		\Ensure{All shortest paths from $ i $ to $ j $ on graph $ G $: $ \mathbb{P}(i,~j) $}
		\Statex
		\Function{warm\_cal\_all\_shortest\_paths}{$ \mathbb{M}_{s} $, $ \mathbb{C}_{l} $, $ i$, $j$}
		\If{$ i == j $ }
		\State \Return {$[[i]] $}
		\EndIf

		\State {$ \mathbb{P}(i,~j) $ $\gets$ $empty\_list$}
		
		\State {Use Dijkstra's algorithm to calculate a shortest path from node $ i $ to $ j $, on the graph defined by $ \mathbb{M}_{s} $, noted $ \Psi_{(i,j)} $;}
 
		\State {Append $ \Psi_{(i,j)} $ to $ \mathbb{P}(i,~j) $;}
		
		\State {Divide  nodes  in $ \mathbb{C}_{l} $  into two sets: nodes in  $\{i,~j\} $, noted $\Phi_p$;  nodes not in  $\{i,~j\} $, noted  $\Phi_r$;}
	
		\For{$t$ $ in $ $\Phi_r$}
		\State {Calculate the shortest path from node $ i $ to  $ t $, and $ t $ to  $ j $, link the two paths into a new path $ P\_new $;}
		\State {Check if  $ P\_new $ is already in $ \mathbb{P}(i,~j) $, if yes, $ continue $;}
		\State {Append $P\_new$ to $ \mathbb{P}(i,~j) $;}				
		\EndFor

		\State \Return {$ \mathbb{P}(i,~j) $}
		\EndFunction
	\end{algorithmic}
	\label{alg:sp_apsp_all}
\end{algorithm}

\subsection{All shortest paths on undirected graph}
When the graph is undirected and the APSP matrix is unknown, Algorithm \ref{alg:all_sp_undirected} can be used to calculate all shortest paths between two nodes. Since the APSP matrix is unknown, the calculation is cold-start. The average case complexity of Algorithm \ref{alg:all_sp_undirected} is $ O(n^2) $. When the graph is directed, the complexity is $O(n^3) $, because we need $O(n^3) $ time to calculate the APSP matrix firstly.

\begin{algorithm} 
	\caption{Cold-start calculation of all shortest paths on undirected graph}
	\begin{algorithmic}[1]
	\Require{Adjacency matrix: $ \mathbb{A}_{G} $,  start node: $ i$,  end node: $ j$}
		\Ensure{All shortest paths from $ i $ to $ j $ on graph $ G $: $ \mathbb{P}(i,~j) $}
		\Statex
		\Function{all\_shortest\_paths\_undirected\_graph}{$ \mathbb{A}_{G} $, $ i$, $j$}
		\If{$ i == j $ }
		\State \Return {$[[i]] $}
		\EndIf
		
		\State {$remaining\_node\_list$ $\gets$ $G - i - j$}
		\State {$candidate\_node\_list$ $\gets$ $empty\_list$}
		\State {$candidate\_node\_list.append(i)$}
		
		\State {Use Dijkstra's algorithm to calculate shortest path distances from node $ i $ to all nodes on graph $ G $, noted $ \mathbb{V}_{i} $;}
		\State {Use Dijkstra's algorithm to calculate shortest path distances from node $ j $ to all nodes on graph $ G $, noted $ \mathbb{V}_{j} $;}		
 
		\For{$t$ $ in $ $remaining\_node\_list$}
		\If{$ \mathbb{V}_{i}[j]  <  \mathbb{V}_{i}[t]  + \mathbb{V}_{j}[t] $ }
		\State {$pass$}
		\Else
		\State {$candidate\_node\_list.append(t)$}
		\EndIf			
		\EndFor	
		\State {$candidate\_node\_list.append(j)$}		
		
		\State {Calculate $small\_matrix$  $ \mathbb{M}_{s} $ with $candidate\_node\_list$ and $ \mathbb{A}_{G} $;}

		\State {Use Algorithm \ref{alg:sp_apsp_all} to calculate all shortest paths between $ i $ and $ j $ on graph $ G $, noted $ \mathbb{P}(i,~j) $;}

		\State \Return {$ \mathbb{P}(i,~j) $}
		\EndFunction
	\end{algorithmic}
	\label{alg:all_sp_undirected}
\end{algorithm}

\subsection{Maintaining a  key\_node\_list }
When all shortest paths from node $ i $ to $ j $ is known, we can calculate a $ key\_node\_list $ for node pair $ (i,~j) $, which collects all the essential nodes to form a shortest path from node $ i $ to $ j $. When needing to remove a node, we can just check each pair of nodes' $ key\_node\_list $ to decide if the shortest path is affected. Algorithm \ref{alg:APSPremove_key} is a variant of Algorithm \ref{alg:APSPremove}, which calculates the new APSP matrix by utilizing the $ key\_node\_list $. Since the $ key\_node\_list $ for each pair of nodes is usually small, the average case space complexity is $ O(n^2) $. 

Step 2 to 8 of Algorithm \ref{alg:APSPremove_key} can be calculated in advance of knowing which node is about to be removed. Algorithm \ref{alg:APSPremove_key}  works even when all shortest paths calculated in Step 4 is not complete (e.g., we have missed some shortest paths during Step 4).

\begin{algorithm} 
	\caption{warm-start calculation of APSP by $ key\_node\_list $}
	\begin{algorithmic}[1]
		\Require{$G$, APSP of G: $ \mathbb{M}_{G} $,  node to be removed: $ G_k $, hyper-parameter: $\delta$} 
		\Ensure{$APSP ~of ~G - G_k:  \mathbb{M}_{G - G_k} $}
		\Statex
		\Function{APSP\_by\_key\_node\_list}{$G$, $ \mathbb{M}_{G} $, $ G_k $, $\delta$}
		
		\State {$key\_node\_list\_all$ $\gets$ $empty\_list$}
		\For{Each pair of node $ (i,~j) $}
		\State {Use Algorithm \ref{alg:sp_apsp_all} to calculate all the shortest paths from node $ i $ to $ j $ on  graph $ G $;}
		\State {Calculate the intersection of all shortest paths, noted $ \mathbb{L}_{k} $, $ \mathbb{L}_{k} $ collects all the essential nodes to form a shortest path from node $ i $ to $ j $;}	
		\State {Remove node $ i $ and $  j $  from $ \mathbb{L}_{k} $;}
		\State {$key\_node\_list\_all.append(\mathbb{L}_{k})$}
		\EndFor	
		
		\State {Calculate the $need\_update\_list$ of Algorithm \ref{alg:APSPremove} with $key\_node\_list\_all$, by checking each pair of nodes' $\mathbb{L}_{k}$ to decide if the shortest path is affected when removing  $ G_k $;}	
		
		\State {Use Step $ 4 $ to $ 20 $ of Algorithm \ref{alg:APSPremove} to calculate $\mathbb{M}_{G - G_k} $;}
 
		\State \Return {$\mathbb{M}_{G - G_k} $}
		\EndFunction
	\end{algorithmic}
	\label{alg:APSPremove_key}
\end{algorithm}

\subsection{Another variant of Algorithm \ref{alg:APSPremove}}
Although it can be calculated in advance of knowing which node is to be removed, the $key\_node\_list\_all$ in Algorithm \ref{alg:APSPremove_key} is very expensive to calculate, the time complexity is at least $ O(n^3) $. Therefore, we devise another variant of Algorithm \ref{alg:APSPremove}, which uses the conclusion of Corollary \ref{corollary3} and the technique used in Algorithm \ref{alg:APSPremove_key} to calculate the $key\_node\_list$, for one pair of nodes. Then check if the being removed node is in the $key\_node\_list$ from node $ i $ to $ j $. To save some time, we replace $\{i,~j\} $ with $ \Psi_{(i,j)} $ in Step 8 of Algorithm \ref{alg:sp_apsp_all}.
 
The new algorithm is referred to as Algorithm \ref{alg:APSPremove_key_new}. Further experiment in Section \ref{sec:test} shows Algorithm \ref{alg:APSPremove_key_new} performs better than Algorithm \ref{alg:APSPremove}.

\begin{algorithm} 
	\caption{APSP by $ key\_node\_list $ and Corollary \ref{corollary3}}
	\begin{algorithmic}[1]
		\Require{Adjacency matrix: $ \mathbb{A}_{G} $, APSP of G: $ \mathbb{M}_{G} $,  node to be removed: $ G_k $, hyper-parameter: $\delta$} 
		\Ensure{$APSP ~of ~G - G_k:  \mathbb{M}_{G - G_k} $}
		\Statex
		\Function{APSP\_by\_key\_node\_list}{$G$, $ \mathbb{M}_{G} $, $ G_k $, $\delta$}
		\State {$remaining\_node\_list$ $\gets$ $G - G_k$}
		\For{$i$ $ in $ $remaining\_node\_list$}
		\For{$j$ $ in $ $remaining\_node\_list$}
		\If{$ \mathbb{M}_{G}[i,~j]  >=  \mathbb{M}_{G}[i,~k]  + \mathbb{M}_{G}[k,~j] $ }
		\State {Calculate the $ key\_node\_list $ from $ i $ to $ j $ with the technique used in Algorithm \ref{alg:APSPremove_key}; }
		\If{$ G_k $ in $ key\_node\_list $}	
		\State {$need\_update\_list[i].append(j)$}
		\EndIf
		\EndIf	
		\EndFor 
		\EndFor 		
		\State {Use Step $ 4 $ to $ 20 $ of Algorithm \ref{alg:APSPremove} to calculate $\mathbb{M}_{G - G_k} $;}
 
		\State \Return {$\mathbb{M}_{G - G_k} $}
		\EndFunction
	\end{algorithmic}
	\label{alg:APSPremove_key_new}
\end{algorithm}

\subsection{A variant of Algorithm \ref{alg:APSPremove_key_new}}
Algorithm \ref{alg:APSP_algo12} is a variant of Algorithm \ref{alg:APSPremove_key_new}. It is based on the conclusions of Theorem \ref{theo_algo12} and Corollary \ref{corollary3}. Preliminary test shows Algorithm \ref{alg:APSP_algo12} is slightly faster than Algorithm \ref{alg:APSPremove_key_new}.

\begin{theorem}
	\label{theo_algo12}
Suppose $ i,~j,~k\in \{1,2, \dots, N \}$,	$ k \neq i$, $ k \neq j $. If the shortest path distance from node $ G_i $ to $ G_j $ on graph $ G - G_k $ is larger than on graph $ G $, then $ G_k $ is necessary for the shortest path from node $ G_i $ to $ G_j $ on graph $ G $.
\end{theorem}

\begin{proof}
	Suppose $ G_k $ is not necessary for the shortest path from node $ G_i $ to $ G_j $ on graph $ G $, which means there exists one shortest path from $ G_{i} $ to $ G_j $ which does not contain node $ G_{k} $,  on graph $ G $. That is to say: the shortest path distance from node $ G_i $ to $ G_j $ on graph $ G - G_k $ is equal to  on graph $ G $, which is contradicted to the condition that ``the shortest path distance from node $ G_i $ to $ G_j $ on graph $ G - G_k $ is larger than on graph $ G $."
\end{proof}

\begin{algorithm} 
	\caption{APSP by Theorem \ref{theo_algo12} and Corollary \ref{corollary3}}
	\begin{algorithmic}[1]
		\Require{Adjacency matrix: $ \mathbb{A}_{G} $, APSP of G: $ \mathbb{M}_{G} $,  node to be removed: $ G_k $, hyper-parameter: $\delta$} 
		\Ensure{$APSP ~of ~G - G_k:  \mathbb{M}_{G - G_k} $}
		\Statex
		\Function{APSP\_by\_key\_node\_list}{$G$, $ \mathbb{M}_{G} $, $ G_k $, $\delta$}
		\State {$remaining\_node\_list$ $\gets$ $G - G_k$}
		\For{$i$ $ in $ $remaining\_node\_list$}
		\For{$j$ $ in $ $remaining\_node\_list$}
		
		\If{$ \mathbb{M}_{G}[i,~j]  >=  \mathbb{M}_{G}[i,~k]  + \mathbb{M}_{G}[k,~j] $ }
		
		\State {$temp\_list$ $\gets$ $G - G_i - G_j - G_k$}
		\State {$candidate\_node\_list$ $\gets$ $empty\_list$}
		\State {$candidate\_node\_list.append(i)$}
		\For{$t$ $ in $ $temp\_list$}
		\If{$ \mathbb{M}_{G}[i,~j]  <  \mathbb{M}_{G}[i,~t]  + \mathbb{M}_{G}[t,~j] $ }
		\State {$pass$}
		\Else
		\State {$candidate\_node\_list.append(t)$}
		\EndIf			
		\EndFor	
		\State {$candidate\_node\_list.append(j)$}
		\State {$K$ $\gets$ $len(candidate\_node\_list)$}		
		\State {Use $candidate\_node\_list$ and $ \mathbb{A}_{G} $ to calculate the $small\_matrix$;}		
		\State {Use Dijkstra's algorithm to calculate the shortest path distance from node $ 0 $ to node $ K-1 $, on the graph defined by $small\_matrix$, noted $ \Phi_{ij} $;}	
		\If{$\Phi_{ij}   >   \mathbb{M}_{G}[i,~j] $}	
		\State {$need\_update\_list[i].append(j)$}
		\EndIf					 	
		\EndIf	
		\EndFor 
		\EndFor 		
		\State {Use Step $ 4 $ to $ 20 $ of Algorithm \ref{alg:APSPremove} to calculate $\mathbb{M}_{G - G_k} $;}
 
		\State \Return {$\mathbb{M}_{G - G_k} $}
		\EndFunction
	\end{algorithmic}
	\label{alg:APSP_algo12}
\end{algorithm}

\section{Testing of the  algorithms} \label{sec:test}
We tested the algorithms for warm-start calculation of the new APSP matrix after a minor update of a dense graph, e.g., removing a node, or modifying an edge. All the code in the experiments is implemented with Python. To compare the  algorithms more reliably, we convert the Python code into C++ code.

\subsection{Experiment \RNum{1}} 
In Experiment \RNum{1}, we test warm-start calculation of the new APSP matrix after removing a node, and compare with cold-start calculation of the Floyd-Warshall algorithm. In the experiment, a random node is removed from a complete graph, then record the time spent for calculating the new APSP matrix, by warm-start calculation of Algorithm \ref{alg:APSPremove} and cold-start of Floyd-Warshall algorithm. The ratio of the used time is calculate with Equation \ref{equ:ratioo}. Different sizes of complete graphs are tested, from 1,000 nodes to 5,000 nodes. For each size of graph less than 5,000 nodes, we repeat the experiment 20 times and calculate the average and standard deviation (SD) of  the ratios; for graph of size of 5,000 nodes, the experiment is repeated five times. 
\begin{equation}
r =  \frac{APSP_{warm}(G^{'})}{APSP_{cold}(G^{'})}	
\label{equ:ratioo}
\end{equation}
\begin{table}
		\caption{Warm-start vs. cold-start calculation of the APSP matrix, after removing a node.}
	\begin{center}
		\scalebox{0.95}{
			
\begin{tabular}{llllll}
	\toprule
	& N = 1000 & N = 2000 & N = 3000 & N = 5000 & Avg \\ \hline
	Ratio & 0.58$ \pm $0.32& 0.48$ \pm $0.22& 0.61$ \pm $0.19& 0.6$ \pm $0.13& 0.57   \\ \bottomrule
\end{tabular} 	
}	
 
		\label{tab:exp1}
	\end{center}
\end{table}
As shown in Table \ref{tab:exp1}, a warm-start calculation only needs 0.57 of the time, to a  cold-start calculation with the Floyd-Warshall algorithm on average. That means we can save 43\% of calculation time if using warm-start calculation. 

\subsection{Experiment \RNum{2}} 
The setting of Experiment \RNum{2} is similar to Experiment \RNum{1}, except that we are testing modifying an edge, not removing a node. As shown in Table \ref{tab:exp2}, we can save 50\% of calculation time if using warm-start calculation, when compared with the Floyd-Warshall algorithm.
 
\begin{table}
		\caption{Warm-start vs. cold-start calculation of the APSP matrix, after modifying an edge.}
	\begin{center}
		\scalebox{0.95}{
			
\begin{tabular}{llllll}
	\toprule
	& N = 1000 & N = 2000 & N = 3000 & N = 5000 & Avg \\ \hline
	Ratio & 0.46$ \pm $0.24& 0.46$ \pm $0.21& 0.58$ \pm $0.23& 0.51$ \pm $0.11& 0.5   \\ \bottomrule
\end{tabular} 	
}		
 
		\label{tab:exp2}
	\end{center}
\end{table}

\subsection{Experiment \RNum{3}} 
In Experiment \RNum{3}, we test warm-start calculation of the shortest path between two nodes, based on the known APSP matrix and Theorem \ref{theorem_spd_4}. And compared with cold-start calculation of the shortest path by Dijkstra's algorithm.

Other settings of the experiment are similar to Experiment \RNum{1} and \RNum{2}. In the experiment, we test calculating the shortest path between two nodes, on complete graphs of different sizes, by warm-start and cold-start calculation separately. Each method is repeated 1,000 times.  The result shows  a warm-start calculation only needs 0.01 of the time of a  cold-start calculation on average. That means we can save 99\% of calculation time if using warm-start calculation. 
\begin{table}
		\caption{Warm-start vs. cold-start calculation of shortest path.}
	\begin{center}
		\scalebox{0.95}{
			
\begin{tabular}{llllll}
	\toprule
	& N = 1000 & N = 2000 & N = 3000 & N = 5000 & Avg \\ \hline
	Ratio & 0.02$ \pm $0.00& 0.02$ \pm $0.01& 0.01$ \pm $0.00& 0.007$ \pm $0.00& 0.01   \\ \bottomrule
\end{tabular} 	
}		
 
		\label{tab:exp3}
	\end{center}
\end{table}
 
 \subsection{Experiment \RNum{4}}
 The setting of Experiment \RNum{4} is similar to Experiment \RNum{1}, except that we are using Algorithm \ref{alg:APSPremove_key_new}, instead of Algorithm \ref{alg:APSPremove}. By comparing Table \ref{tab:exp4} with Table \ref{tab:exp1}, we can see that Algorithm \ref{alg:APSPremove_key_new} performs better than Algorithm \ref{alg:APSPremove}.
 \begin{table}
		\caption{Warm-start vs. cold-start calculation of the APSP matrix, after removing a node, by Algorithm \ref{alg:APSPremove_key_new}.}
		\scalebox{0.95}{
			
\begin{tabular}{llllll}
	\toprule
	& N = 1000 & N = 2000 & N = 3000 & N = 5000 & Avg \\ \hline
	Ratio & 0.45$ \pm $0.31& 0.34$ \pm $0.22& 0.24$ \pm $0.13& 0.33$ \pm $0.16& 0.34   \\ \bottomrule
\end{tabular} 	
}		
	\label{tab:exp4}
\end{table}

\section{Discussion} 
The algorithms can be revised for warm-start calculation of the minimax path problem or widest path problem, on a large dense graph.

\section{Conclusion} 
We propose two algorithms (and some variants) for  warm-start calculation of the all-pairs shortest path (APSP) matrix after a minor modification of a weighted dense graph, e.g., adding a node, removing a node, or updating an edge. We assume the APSP matrix for the original graph is already known, and try to warm-start from the known APSP matrix to reach the new APSP matrix.  A cold-start calculation of the APSP matrix for the updated graph needs $ O(n^3) $ time. It is a very expensive time cost for a large dense graph. We are trying to utilize the already calculated APSP matrix to make calculation of the new APSP matrix less expensive. The best case complexity for a warm-start calculation is $ O(n^2) $, the worst case complexity is $ O(n^3) $.

We implemented the algorithms and tested their performance with experiments. The result shows a warm-start calculation can save a large portion of calculation time when compared with the Floyd-Warshall algorithm. Moreover, we proposed another algorithm for warm-start computing of the shortest path between two nodes, and tested it. Result shows warm-start computing can save 99\% of time, compared with cold-start computing  of the shortest path by Dijkstra's algorithm.

\bibliographystyle{ACM-Reference-Format}
\bibliography{mmj} 
 
\end{document}